\newcommand{\complexity}[1]{\textsc{#1}}
\newcommand{\NP}{\complexity{NP}}
\renewcommand{\P}{\complexity{P}}
\newcommand{\NPC}{\complexity{NP-Complete}}
\newcommand{\NPH}{\complexity{NP-Hard}}
\newcommand{\avertex}{\textsc{$\alpha$-Balanced-Vertex-Separator}}
\newcommand{\asubgraphvertex}{\textsc{$\alpha$-Subgraph-Balanced-Vertex-Separator}}
\newcommand{\YES}{\textsc{Yes}}
\newcommand{\NO}{\textsc{No}}
\newcommand{\card}[1]{|#1|}
\newtheorem{lemma}{Lemma}
\title{Yet Another Graph Partitioning Problem is \NPH{}}
\author{Ryan H. Lewis}
\begin{document}
\maketitle
\abstract{
Recently a large number of graph separator problems have been proven to be \NPH{}. 
Amazingly we have found that \asubgraphvertex{}, an important variant, has been overlooked.
In this work ``Yet Another Graph Partitioning Problem is \NPH{}" we present the
surprising\footnote{Dear dense reader: We are being completely serious this was totally not to be expected.} result that \asubgraphvertex{} is \NPH{}. This is despite the fact that the constraints 
of our new problem are harder to satisfy than the original problem.
}

\section{Introduction}
A \emph{vertex separator} is a set of vertices whose removal from a graph results in a disconnected graph.
Usually, we call a vertex separator $\alpha$-\emph{vertex-balanced} if the largest connected component only contains
an $\alpha$ fraction of the total [remaining] vertices. Finding a small $\alpha$-vertex-balanced separator 
is \NPH{}~\cite{bj-vp-92,VertexSepOdd}. In this work we show that finding separators is still hard if we change this constraint.
We begin with a review of graphs and complexity theory.

A \emph{graph} $G = (V,E)$ is a set $V$ of \emph{vertices}, and a set $E \subseteq V \times V$  of \emph{edges}. A subgraph $H \subset G$ is a graph composed of a subset of the vertices and edges of $G$. The \emph{degree} of a vertex $v$ is the number of edges incident to $v$ and is denoted $\deg_G(v)$. A vertex \emph{separator} $I \subset V$ is a set of vertices whose removal from $G$ results in a disconnected graph. For each connected component in $G \setminus I$ define it's vertex set to be $V_i$ and $E_i$ it's edge set. Let $E_i^I$ be the set of edges between $V_i$ and $I$ in $G$ and $\beta_i$ to be the set of vertices in $V_i$ incident to a vertex in $I$ in $G$.   Recall that:
\[ \sum_{v \in V} \deg_G(v) = 2\card{E} \]

A \emph{decision problem} is a question in some formal system with a \YES{} or \NO{} answer. An algorithm is said to be of polynomial time if its running time is upper bounded by a polynomial expression in the size of the input for the algorithm. A decision problem is said to be of polynomial time if there is an algorithm for determining it's solution which is of polynomial time. The class of polynomial time decision problems is called \P{}. \NP{} is the subclass of decision problems with a \YES{} solution where the problem of verifying the correctness of this solution is in \P{}. Given two decision problems A and B a \emph{polynomial time many one reduction} or \emph{reduction} from A to B is a polynomial time computable mapping $f: A \rightarrow B$ which preserves \YES{}/\NO{} answers. 
\section{Background}
We now state the  \avertex{} decision problem.
\begin{description}
\item[\textsc{Problem:}]  \avertex{}
\item[\textsc{Instance:}] A degree at most three graph $G$
\item[\textsc{Goal:}] Find a vertex separator $(V_1,V_2,I)$ of $G$ such that: 
\[ \card{I} \textrm{ is minimized.}  \textrm{ subject to }  \max_i{(\card{V_i})}  \leq \alpha(\card{V}) \]
\end{description}
The problem \avertex{} is \NPH{} ~\cite{bj-vp-92}. Wagner and M\"uller proved a similar result where the balance 
criterion was stated as $\max_i{\card{V_i}} \leq \card{V \setminus I}$ but $G$ was explicitly 3-regular~\cite{VertexSepOdd}. Both authors use a local replacement argument to prove there result. The local replacement gadget used by Bui and Jones is shown in Figure~\ref{fig:orig-gadget}. We sketch this argument now.

Suppose a graph $G$ has $n$ vertices and $m \leq n^2$ edges.  Bui and Jones create a gadget as shown in Figure~\ref{fig:orig-gadget}, where there are $4n^2$ cycles each containing $4n^2$ vertices. The white notes are referred to as \emph{outlet nodes}. There are $n^2$ outlet nodes. They create a new graph $G^*$ in which each local replacement gadget replaces a vertex in the original graph, and an edge $(v_i,v_j)$ in $G$, becomes an edge in $G^*$ between outlet node $i$ on $v_j$ and outlet node $j$ on $v_i$. Outlet nodes which are incident to edges originally in $G$ are called \emph{marked}. Clearly this forces the modified graph $G^*$ to be degree at most three. The authors the argue that:
\tikzset{
    master/.style={
        execute at end picture={
            \coordinate (lower right) at (current bounding box.south east);
            \coordinate (upper left) at (current bounding box.north west);
        }
    },
    slave/.style={
        execute at end picture={
            \pgfresetboundingbox
            \path (upper left) rectangle (lower right);
        }
    }
}
\newcommand{\gscale}{1.9}
\begin{figure*}[h1]
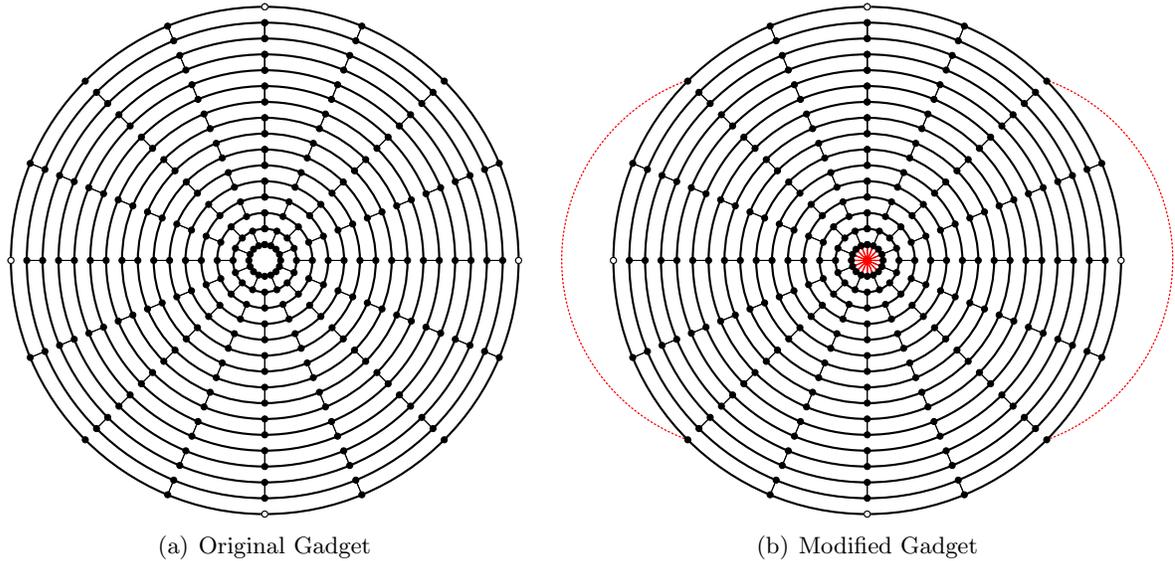

\centering
\subfigure[Original Gadget]{
	\begin{tikzpicture}[master,scale=\gscale,cap=round,rotate=90]
	\input{orig_gadget}
	\end{tikzpicture}
	\label{fig:orig-gadget}
}
\hfill
\subfigure[Modified Gadget]{
\begin{tikzpicture}[slave,scale=\gscale,cap=round,rotate=90]
\input{mod_gadget}
\end{tikzpicture}

%\filldraw[densely dotted, line width=.5pt] (360:.148) -- (3375/10:.148);

\label{fig:mod-gadget}
}
\caption{(left) The local replacement gadget used by Bui \& Jones for a graph with two vertices. 
	 (right) We modify the original construction so that this operation produces a three-regular graph, instead of a graph with bounded degree three. To do this use gadget on the right in $G^*$ in place of the one on the left and then simply pair off remaining outlet nodes after installing the original edges from $G$. In the event that a vertex originally odd degree then add one more vertex between non outlet nodes.}
\label{fig:gadget}
\end{figure*}

\begin{enumerate}
\item Vertex separators in $G^*$ may be transformed into ones that are composed exclusively of outlet nodes.~(Page 157)
\item All minimal vertex separators in $G^*$ can be made to be independent sets~(Page 157)
\end{enumerate}
While not explicitly stated in the result it is easy to see that more guarantees on the separators of $G^*$ exist.  In particular, the authors assume the following property:
\begin{enumerate}
\item Any two vertices in a minimal vertex separator of $G^*$ are not adjacent to the same vertex.
\end{enumerate}
Finally, It is straightforward to modify the reduction to enforce that $G$ is exactly 3-regular.  The key is to slightly modify the local replacement gadget as show in Figure~\ref{fig:mod-gadget} and then to slightly modify it's use in the local replacement argument.  The modified gadget shows how only the unmarked outlet nodes within a gadget might be degree two. If there are even number of them, we may pair them off arbitrarily. If there are odd number, we add one more node to the outermost cycle in the appropriate place to pair this node with. All of the original arguments of Bui \& Jones hold so we do not repeat them here. We end this section by observing that this replacement argument has  a lot of freedom. For example by doubling the size of the gadget we can repeat each original edge in $G$ twice. This has the effect of doubling the degree of of every vertex. This not only handles the issue of odd degree vertices, but, also forces the size of a vertex separator to be even. In the next section we define our new vertex separator problem and get a reduction from the 3-regular variant of \avertex{} discussed in this section.

\section{Subgraph Balanced Vertex Separators}
We are interested in balanced vertex separators based on the size of their subgraphs. We state the our variant of the problem \asubgraphvertex{}. 
\noindent For any $\alpha \in (\frac{1}{2},1)$
\begin{description}
\item[\textsc{Problem:}]  \asubgraphvertex{}
\item[\textsc{Instance:}] A graph $G$
\item[\textsc{Goal:}] Find a vertex separator $(V_1,V_2,I)$ of $G$ such that: 
\[ \card{I} \textrm{ is minimized.} \] 
\textrm{ subject to } $ \max_i{(\card{V_i} + \card{E_i})} + \card{I} \leq \alpha(\card{V}+\card{E})  \textrm{ and } 
 I \textrm{ is an independent set.} $
\end{description}
 We now show that this problem is \NPH{} and it's decision problem variant \NPC{}.

From this point forward all graphs are subgraphs of 3-regular graphs. We assume that our graph instances have all the necessary properties discussed in the
previous section. We abuse notation
by dropping the notation $\card{.}$ to denote cardinality and denote the cardinality of a set by the
name of that set for brevity. We consider a graph instance $G^*$ of \avertex{} after the local replacement argument
and view it directly as an instance of \asubgraphvertex{}. To complete the reduction we need only show
that the two balance constraints are equivalent. That is, suppose we are given a \YES{} instance of \asubgraphvertex{} 
e.g we have a separator where 
\begin{equation}
 \max_i{(V_i + E_i + E_i^I)} +I \leq \alpha \frac{5}{2}V 
 \label{eqn:coverb}
 \end{equation}
We now show that this implies \[ \max_i(V_i) \leq \alpha V \]
and vice-versa. To begin we state a few lemmas.
\begin{lemma}
 \[ 3V_i = \sum_{v \in V_i} \deg_{G}(v) = 2E_i + E_i^I \]
 \label{lemma:one}
 \end{lemma}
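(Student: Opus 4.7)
The plan is to read the two equalities separately and observe that each follows from a single standard fact about 3-regular graphs together with the definition of a vertex separator.

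First I would establish $\sum_{v \in V_i} \deg_G(v) = 3 V_i$. Since we have assumed (per the remark at the start of Section~3) that $G$ is 3-regular, every vertex $v$ satisfies $\deg_G(v) = 3$, and summing over the $V_i$ vertices of the component gives exactly $3 V_i$.

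Next I would establish $\sum_{v \in V_i} \deg_G(v) = 2E_i + E_i^I$ by a standard incidence-counting / handshake argument restricted to $V_i$. Every edge incident to a vertex of $V_i$ has its other endpoint either (a) in $V_i$ itself, or (b) in $I$. It cannot have its other endpoint in some other component $V_j$ with $j \neq i$, because by definition of a separator, removing $I$ leaves no edges between distinct components. Edges of type (a) are exactly the edges of $E_i$, and each such edge contributes $2$ to $\sum_{v \in V_i} \deg_G(v)$ (once for each of its two endpoints in $V_i$). Edges of type (b) are exactly the edges of $E_i^I$, and each such edge contributes $1$ (only the endpoint in $V_i$ is counted). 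Summing gives $2E_i + E_i^I$, as required.

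There is essentially no obstacle here: the only thing to be careful about is the convention that $E_i^I$ counts edges between $V_i$ and $I$ rather than, say, edges in $I$ or edges incident to $\beta_i$ counted with multiplicity. Given the definitions stated in Section~1, the count is clean and no further case analysis is needed.
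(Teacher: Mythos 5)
Your proof is correct and is essentially the paper's own argument, just written out in more detail: the paper likewise invokes 3-regularity for the first equality and the handshake (degree-sum) count for the second, noting that edges in $E_i^I$ are counted only once. Your explicit observation that no edge from $V_i$ can reach another component $V_j$ is a worthwhile clarification but does not change the approach.
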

 \begin{proof}
 This follows from the proof of the first theorem of graph theory where the edges in $E_i^I$ are only counted once and that our graph instance is 3-regular.
  \end{proof}
 \begin{lemma}
 \[ \beta_1 + \beta_2 = 3I \textrm{ and }  I \leq \beta_i \leq 2I  \]
 \label{lemma:two}
 \end{lemma}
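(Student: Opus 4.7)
The plan is to double-count the edges between $I$ and $V \setminus I$, leveraging three properties established in the preceding section for any minimum (and hence minimal) separator of $G^*$: (i) the graph is $3$-regular, (ii) $I$ can be taken to be an independent set, and (iii) distinct vertices of $I$ share no common neighbor in $G^*$.

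For the identity $\beta_1 + \beta_2 = 3I$, I would argue as follows. Because $I$ is independent, every edge incident to a vertex of $I$ has its other endpoint in $V \setminus I = V_1 \cup V_2$; $3$-regularity then gives exactly $3I$ such edge-endpoints. By property (iii) no two of these endpoints coincide, so they correspond to $3I$ \emph{distinct} vertices of $V \setminus I$. Since $V_1$ and $V_2$ partition $V \setminus I$, and $\beta_i$ counts exactly the vertices of $V_i$ incident to $I$, this gives $\beta_1 + \beta_2 = 3I$.

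For the bounds $I \leq \beta_i \leq 2I$, I would invoke minimality of $I$: if some $v \in I$ had no neighbor in $V_2$ (the argument is symmetric), then $v$ does not lie on any $V_1$-to-$V_2$ path, so $I \setminus \{v\}$ would already separate $V_1$ from $V_2$, contradicting minimality. Hence each $v \in I$ contributes at least $1$ and at most $2$ neighbors to each $V_i$; summing over the $I$ vertices of $I$ and again using property (iii) to prevent double-counting yields $I \leq \beta_i \leq 2I$.

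The main obstacle is the subtle appeal to property (iii). Without it one would only obtain the weaker inequality $\beta_1 + \beta_2 \leq 3I$, which would be insufficient when the balance-equivalence argument converts the constraint in Equation~\eqref{eqn:coverb} back into the vertex-only bound $\max_i(V_i) \leq \alpha V$.
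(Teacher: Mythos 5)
Your proof is correct and follows essentially the same route as the paper's: $3$-regularity plus independence of $I$ gives $3I$ boundary edge-endpoints, the no-shared-neighbor property makes them distinct vertices partitioned between $V_1$ and $V_2$, and minimality of the separator forces each vertex of $I$ to have a neighbor on each side, yielding the bounds. You merely spell out the minimality argument and the role of the shared-neighbor property, which the paper leaves implicit.
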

 \begin{proof}
Since $G$ is 3-regular and $I$ is an independent set with the property mentioned its boundary is exactly $3I$ vertices. The bounds follow since each vertex
 must be adjacent to at least one vertex on either side.
 \end{proof}
 \noindent We now proceed show that the balance condition for \asubgraphvertex{} implies that of \avertex{}. 
Using Lemma~\ref{lemma:one} we simplify Equation~\ref{eqn:coverb}.
\[ \max_i{(4V_i - E_i)} + I =  \max_i{(V_i + E_i + E_i^I)} +I \leq \alpha \frac{5}{2}V \]
However now applying the first theorem of graph theory for the subgraph $G_i$ induced by $V_i$ we have
\[ 2E_i = \sum_{v \in V_i} \deg_{G_i}(v) = 3(V_i -\beta_i) + 2\beta_i = 3V_i - \beta_i \]
which we use to substitute $E_i = \frac{3}{2}V_i - \frac{1}{2}\beta_i$. We derive the following equation:
  \begin{equation}
 \max_i{(\frac{5}{2}V_i - \frac{1}{2}\beta_i)} +I  \leq \alpha \frac{5}{2}V 
 \label{eqn:two}
\end{equation}
Without loss of generality we assume that $\beta_2 \geq \beta_1$ and write

\[ \max{(\frac{5}{2}V_1 - \frac{1}{2}\Delta{\beta_i}, \frac{5}{2}V_2)} - \frac{1}{2}\beta_2 +I  =  \max_i{(\frac{5}{2}V_i - \frac{1}{2}\beta_i)} +I  \leq \alpha \frac{5}{2}V \]
where $\Delta{\beta_i} = \beta_2 - \beta_1$.  Using Equation~\ref{eqn:two} and Lemma~\ref{lemma:two}:
\[ \max(V_1, V_2) \leq \max{(V_1 + \Delta{\beta_i}/5, V_2)} \leq \max{(V_1 + \Delta{\beta_i}/5, V_2)} +\frac{2}{5}I- \frac{1}{5}\beta_2 \leq \alpha{V}  \]

\noindent We have shown that given a \YES{} instance of \asubgraphvertex{} is a \YES{} instance of \avertex{}. 

We now need to show the converse. Specifically if $\max(V_1, V_2) \leq \alpha{V}$ then  
\begin{equation}
\frac{5}{2}\max(V_1 + \frac{1}{5}\Delta{\beta_i}, V_2) - \frac{1}{2}\beta_2 \leq \alpha\frac{5}{2}V = \alpha(V+E)
\label{eqn:goal}
\end{equation}
Again we first assume that $V_1 + \frac{1}{5}\Delta{\beta_i} \leq V_2$. 
\[ \frac{5}{2}\max(V_1 + \frac{1}{5}\Delta{\beta_i}, V_2) - \frac{1}{2}\beta_2  = \frac{5}{2} \max(V_1,V_2) - \frac{1}{2}\beta_2 < \frac{5}{2}\max(V_1,V_2) \leq \frac{5}{2}\alpha{V} \]
Otherwise  if $V_1 + \frac{1}{5}\Delta{\beta_i}> V_2$
\[ \frac{5}{2}\max(V_1 + \frac{1}{5}\Delta{\beta_i}, V_2) - \frac{1}{2}\beta_2 =  \frac{5}{2}V_1 - \frac{1}{2}\beta_2 \leq \alpha\frac{5}{2}\max(V_1,V_2) \leq \frac{5}{2}\alpha{V} \]
This concludes the argument. 
\section*{Conclusion}
We have shown that \asubgraphvertex{} is \NPH{}. Yet another nail in the coffin for $\P{} = \NP{}$.
  \bibliographystyle{plain}
  \bibliography{yagppis}   

\end{document}